\providecommand{\U}[1]{\protect\rule{.1in}{.1in}}
\newtheorem{theorem}{Theorem}
\newtheorem{definition}[theorem]{Definition}
\newtheorem{lemma}[theorem]{Lemma}
\newtheorem{remark}[theorem]{Remark}
\newenvironment{proof}[1][Proof]{\noindent\textbf{#1.} }{\ \rule{0.5em}{0.5em}}
\begin{document}

\title{\textbf{Analytical Solutions to the Navier-Stokes Equations}}
\author{Y\textsc{uen} M\textsc{anwai\thanks{E-mail address: nevetsyuen@hotmail.com }}\\\textit{Department of Applied Mathematics, The Hong Kong Polytechnic
University,}\\\textit{Hung Hom, Kowloon, Hong Kong}}
\date{Revised 20-Feb-2009}
\maketitle

\begin{abstract}
With the previous results for the analytical blowup solutions of the
$N$-dimensional $(N\geq2)$ Euler-Poisson equations, we extend the similar
structure to construct an analytical family of solutions for the isothermal
Navier-Stokes equations and pressureless Navier-Stokes equations with
density-dependent viscosity.

\end{abstract}

\section{Introduction}

The Navier-Stokes equations can be formulated in the following form:%
\begin{equation}
\left\{
\begin{array}
[c]{rl}%
{\normalsize \rho}_{t}{\normalsize +\nabla\cdot(\rho u)} & {\normalsize =}%
{\normalsize 0,}\\
{\normalsize (\rho u)}_{t}{\normalsize +\nabla\cdot(\rho u\otimes
u)+\delta\nabla P} & {\normalsize =}vis(\rho,u).
\end{array}
\right.  \label{eq1}%
\end{equation}
As usual, $\rho=\rho(x,t)$ and $u(x,t)$ are the density, the velocity
respectively. $P=P(\rho)$ is the pressure. We use a $\gamma$-law on the
pressure, i.e.
\begin{equation}
P(\rho)=K\rho^{\gamma}, \label{eq2}%
\end{equation}
with $K>0$, which is a universal hypothesis. The constant $\gamma=c_{P}%
/c_{v}\geq1$, where $c_{p}$ and $c_{v}$ are the specific heats per unit mass
under constant pressure and constant volume respectively, is the ratio of the
specific heats. $\gamma$ is the adiabatic exponent in (\ref{eq2}). In
particular, the fluid is called isothermal if $\gamma=1$. It can be used for
constructing models with non-degenerate isothermal fluid. $\delta$ can be the
constant $0$ or $1$. When $\delta=0$, we call the system is pressureless; when
$\delta=1$, we call that it is with pressure. And $vis(\rho,u)$ is the
viscosity function. When $vis(\rho,u)=0$, the system (\ref{eq1}) becomes the
Euler equations. For the detailed study of the Euler and Navier-Stokes
equations, see \cite{CW} and \cite{L}. In the first part of this article, we
study the solutions of the $N$-dimensional $(N\geq1)$ isothermal equations in
radial symmetry:%
\begin{equation}
\left\{
\begin{array}
[c]{rl}%
\rho_{t}+u\rho_{r}+\rho u_{r}+{\normalsize \frac{N-1}{r}\rho u} &
{\normalsize =0,}\\
\rho\left(  u_{t}+uu_{r}\right)  +\nabla K\rho & {\normalsize =}vis(\rho,u).
\end{array}
\right.  \label{eq3}%
\end{equation}

\begin{definition}
[Blowup]We say a solution blows up if one of the following conditions is
satisfied:\newline(1)The solution becomes infinitely large at some point $x$
and some finite time $T$;\newline(2)The derivative of the solution becomes
infinitely large at some point $x$ and some finite time $T$.
\end{definition}

For the formation of singularity in the 3-dimensioanl case for the Euler
equations, please refer the paper of Sideris \cite{SI}. In this article, we
extend the results form the study of the (blowup) analytical solutions in the
$N$-dimensional $(N\geq2)$ Euler-Poisson equations, which describes the
evolution of the gaseous stars in astrophysics \cite{DXY}, \cite{GW},
\cite{M1}, \cite{Y} and \cite{Y1}, to the Navier-Stokes equations. For the
similar kinds of blowup results in the non-isothermal case of the Euler or
Navier-Stokes equations, please refer \cite{Li} and \cite{Y}.

Recently, Yuen's results in \cite{Y1}, there exists a family of the blowup
solution for the Euler-Poisson equations in the $2$-dimensional radial
symmetry case,%
\begin{equation}
\left\{
\begin{array}
[c]{rl}%
\rho_{t}+u\rho_{r}+\rho u_{r}+{\normalsize \frac{1}{r}\rho u} &
{\normalsize =0,}\\
\rho\left(  u_{t}+uu_{r}\right)  +K\rho_{r} & {\normalsize =-}\frac{2\pi\rho
}{r}\int_{0}^{r}\rho(t,s)sds.
\end{array}
\right.  \label{gamma=1}%
\end{equation}
The solutions are
\begin{equation}
\left\{
\begin{array}
[c]{c}%
\rho(t,r)=\frac{1}{a(t)^{2}}e^{y(r/a(t))}\text{, }{\normalsize u(t,r)=}%
\frac{\overset{\cdot}{a}(t)}{a(t)}{\normalsize r;}\\
\overset{\cdot\cdot}{a}(t){\normalsize =}-\frac{\lambda}{a(t)},\text{
}{\normalsize a(0)=a}_{0}>0{\normalsize ,}\text{ }\overset{\cdot}%
{a}(0){\normalsize =a}_{1};\\
\overset{\cdot\cdot}{y}(x){\normalsize +}\frac{1}{x}\overset{\cdot}%
{y}(x){\normalsize +\frac{2\pi}{K}e}^{y(x)}{\normalsize =\mu,}\text{
}y(0)=\alpha,\text{ }\overset{\cdot}{y}(0)=0,
\end{array}
\right.  \label{solution1}%
\end{equation}
where $K>0$, $\mu=2\lambda/K$ with a sufficiently small $\lambda$ and $\alpha$
are constants.\newline(1)When $\lambda>0$, the solutions blow up in a finite
time $T$;\newline(2)When $\lambda=0$, if $a_{1}<0$, the solutions blow up at
$t=-a_{0}/a_{1}$.

In this paper, we extend the above result to the isothermal Navier-Stokes
equations in radial symmetry with the usual viscous function
\[
vis(\rho,u)=v\Delta u,
\]
where $v$ is a positive constant:%
\begin{equation}
\left\{
\begin{array}
[c]{rl}%
\rho_{t}+u\rho_{r}+\rho u_{r}+{\normalsize \frac{N-1}{r}\rho u} &
{\normalsize =0,}\\
\rho\left(  u_{t}+uu_{r}\right)  +K\rho_{r} & {\normalsize =v(u}_{rr}%
+\frac{N-1}{r}u_{r}-\frac{N-1}{r^{2}}u),
\end{array}
\right.  \label{usualns1}%
\end{equation}

\begin{theorem}
\label{thm:1}For the $N$-dimensional isothermal Navier-Stokes equations in
radial symmetry (\ref{usualns1}), there exists a family of solutions, those
are:%
\begin{equation}
\left\{
\begin{array}
[c]{c}%
\rho(t,r)=\frac{1}{a(t)^{N}}e^{y(r/a(t))},u(t,r)=\frac{\overset{\cdot}{a}%
(t)}{a(t)}r,\\
\overset{\cdot\cdot}{a}(t)=\frac{-\lambda}{a(t)},a(0)=a_{0}>0,\overset{\cdot
}{a}(0)=a_{1},\\
y(x)=\frac{\lambda}{2K}x^{2}+\alpha,
\end{array}
\right.  \label{*4}%
\end{equation}
where $\alpha$ and $\lambda$ are arbitrary constants.\newline In particular,
for $\lambda>0$, the solutions blow up in finite time $T$.
\end{theorem}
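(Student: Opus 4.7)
The plan is to plug the ansatz directly into system (\ref{usualns1}) and reduce both equations to ordinary differential equations in the time variable $a(t)$ and the self-similar variable $x=r/a(t)$.

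First I would handle the continuity equation. Writing $\rho=a^{-N}e^{y(x)}$ with $x=r/a$, the chain rule gives $\rho_t=-\frac{N\dot a}{a}\rho-\frac{\dot a r}{a^2}\dot y(x)\rho$ and $\rho_r=\frac{1}{a}\dot y(x)\rho$. Since $u=\frac{\dot a}{a}r$, one has $u_r=\dot a/a$ and the two terms involving $\dot y(x)$ cancel, leaving $\rho_t+u\rho_r+\rho u_r+\frac{N-1}{r}\rho u=\rho\left(-\frac{N\dot a}{a}+\frac{\dot a}{a}+\frac{(N-1)\dot a}{a}\right)=0$, independently of the precise form of $y$. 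So the first equation is satisfied automatically.

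The key simplification in the momentum equation is that $u=\frac{\dot a}{a}r$ is linear in $r$, hence $u_{rr}=0$ and $\frac{N-1}{r}u_r-\frac{N-1}{r^2}u=\frac{N-1}{r}\cdot\frac{\dot a}{a}-\frac{N-1}{r^2}\cdot\frac{\dot a r}{a}=0$. Thus the viscous term vanishes identically and the momentum equation collapses to the corresponding (isothermal) Euler momentum equation. Computing $u_t+uu_r=\frac{\ddot a}{a}r$ and using $K\rho_r=\frac{K}{a}\dot y(x)\rho$, after dividing through by $\rho/a$ the equation becomes
\begin{equation*}
\ddot a(t)\,r+\frac{K}{1}\dot y(x)=0,\qquad x=r/a(t).
\end{equation*}
Separation of variables forces the two sides to balance only if $\ddot a(t)=-\lambda/a(t)$ for a constant $\lambda$ and $\dot y(x)=\lambda x/K$. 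Integrating the latter with the implicit normalization gives $y(x)=\frac{\lambda}{2K}x^2+\alpha$, which matches (\ref{*4}).

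Finally, for the blowup statement with $\lambda>0$, I would analyze $\ddot a=-\lambda/a$ by the standard energy integral: multiplying by $\dot a$ gives $\frac{d}{dt}\bigl(\tfrac12\dot a^2+\lambda\ln a\bigr)=0$, so $\dot a^2=a_1^2+2\lambda\ln(a_0/a)$. Since $\ddot a<0$ throughout the life-span, $\dot a$ becomes negative in finite time and then $a$ is monotone decreasing; as $a\downarrow 0$ the right-hand side $a_1^2+2\lambda\ln(a_0/a)$ stays positive and $|\dot a|$ grows, and a direct estimate $dt=-da/\sqrt{a_1^2+2\lambda\ln(a_0/a)}$ shows the integral from the turning point down to $a=0$ is finite. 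At this finite time $T$, $\rho(t,0)=a(t)^{-N}e^{\alpha}$ diverges, so the solution blows up. The main (though still routine) obstacle will be making the finiteness of $\int_0^{a_*}\frac{da}{\sqrt{a_1^2+2\lambda\ln(a_0/a)}}$ explicit, since the integrand has only a logarithmic singularity as $a\to 0$ and is easily shown to be integrable.
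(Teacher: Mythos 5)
Your proposal is correct and follows essentially the same route as the paper: verify the continuity equation directly from the scaling ansatz, observe that the linear velocity profile $u=\frac{\dot a}{a}r$ annihilates the viscous term so the momentum equation separates into $a\ddot a=-\lambda$ and $K\dot y(x)=\lambda x$, and conclude blowup from the finite-time vanishing of $a(t)$. The only (immaterial) difference is in the ODE step: the paper's Lemma bounds $a$ above via the energy integral and integrates $\ddot a\le -\lambda e^{-\theta/\lambda}$ twice to reach a contradiction, whereas you integrate $dt=-da/\dot a$ by quadrature --- and note that the integrand $1/\sqrt{a_1^2+2\lambda\ln(a_0/a)}$ is actually bounded (indeed tends to $0$) as $a\to 0$, the only singularity being the integrable square-root one at the turning point, so your finiteness claim holds for an even simpler reason than the "logarithmic singularity" you mention.
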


In the last part, the corresponding solutions to the pressureless
Navier-Stokes equations with density-dependent viscosity is also studied.

\section{The Isothermal $(\gamma=1)$ Cases}

Before we present the proof of Theorem \ref{thm:1}, the Lemma 6 of \cite{Y1}
could be needed to further extended to the $N$-dimensional space.

\begin{lemma}
[The Extension of Lemma 6 of \cite{Y1}]\label{lem:generalsolutionformasseq}For
the equation of conservation of mass in radial symmetry:
\begin{equation}
\rho_{t}+u\rho_{r}+\rho u_{r}+\frac{N-1}{r}\rho u=0,
\label{massequationspherical}%
\end{equation}
there exist solutions,%
\begin{equation}
\rho(t,r)=\frac{f(r/a(t))}{a(t)^{N}},\text{ }{\normalsize u(t,r)=}%
\frac{\overset{\cdot}{a}(t)}{a(t)}{\normalsize r,}
\label{generalsolutionformassequation}%
\end{equation}
with the form $f\geq0\in C^{1}$ and $a(t)>0\in C^{1}.$
\end{lemma}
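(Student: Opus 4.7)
The plan is a direct verification: substitute the ansatz $\rho(t,r) = f(r/a(t))/a(t)^N$ and $u(t,r) = \dot a(t) r / a(t)$ into the radial mass conservation equation (\ref{massequationspherical}) and check that every term cancels, for any smooth $f \geq 0$ and any smooth $a(t) > 0$. The key observation is that the ansatz is self-similar with similarity variable $x = r/a(t)$, so all $r$-dependence is channeled through $f$ and its first derivative, and all coefficients collapse to a linear relation in $f$ and $xf'$.

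Concretely, I would introduce $x := r/a(t)$ and compute the four pieces appearing in (\ref{massequationspherical}) one at a time. The time derivative produces $\rho_t = -\dot a\, r\, f'(x)/a^{N+2} - N\dot a\, f(x)/a^{N+1}$, arising from differentiating the self-similar argument and the prefactor $a^{-N}$. The convective term is $u\rho_r = \dot a\, r\, f'(x)/a^{N+2}$, which exactly cancels the $f'$ piece of $\rho_t$. For the remaining two terms, $u_r = \dot a/a$ is independent of $r$, so $\rho u_r = \dot a\, f(x)/a^{N+1}$, and $(N-1)\rho u/r = (N-1)\dot a\, f(x)/a^{N+1}$, because the factor of $r$ in $u$ cancels the $1/r$.

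Adding these four contributions, the $f'$ terms cancel in pairs and the $f$ terms combine as $(-N + 1 + (N-1))\dot a\, f/a^{N+1} = 0$. So the mass equation is satisfied identically, with no restriction on $f$ or $a$ beyond the stated $C^1$ regularity (and $a > 0$, $f \geq 0$, which are imposed to guarantee that $\rho$ is well-defined and nonnegative).

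Since the verification is a one-line algebraic identity once the derivatives are in place, there is really no obstacle; the only thing to be careful about is bookkeeping of the chain-rule factors $1/a$ coming from $\partial_r x = 1/a$ and $\partial_t x = -r\dot a/a^2$, and noting that the dimension $N$ enters only through the exponent of $a$ in $\rho$ and through the geometric term $(N-1)/r$, and these two appearances conspire to cancel. The same computation recovers Lemma 6 of \cite{Y1} when $N = 2$.
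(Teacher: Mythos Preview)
Your proposal is correct and follows exactly the same approach as the paper: a direct substitution of the ansatz into the mass equation, with the chain-rule bookkeeping showing the $f'$ terms cancel pairwise and the $f$ terms sum to $(-N+1+(N-1))\dot a f/a^{N+1}=0$. The paper's proof is identical in spirit, just presented as a single displayed computation rather than term-by-term.
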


\begin{proof}
We just plug (\ref{generalsolutionformassequation}) into
(\ref{massequationspherical}). Then
\begin{align*}
&  \rho_{t}+u\rho_{r}+\rho u_{r}+\frac{N-1}{r}\rho u\\
&  =\frac{-N\overset{\cdot}{a}(t)f(r/a(t))}{a(t)^{N+1}}-\frac{\overset{\cdot
}{a}(t)r\overset{\cdot}{f}(r/a(t))}{a(t)^{N+2}}\\
&  +\frac{\overset{\cdot}{a}(t)r}{a(t)}\frac{\overset{\cdot}{f}(r/a(t))}%
{a(t)^{N+1}}+\frac{f(r/a(t))}{a(t)^{N}}\frac{\overset{\cdot}{a}(t)}%
{a(t)}+\frac{N-1}{r}\frac{f(r/a(t))}{a(t)^{N}}\frac{\overset{\cdot}{a}%
(t)}{a(t)}r\\
&  =0.
\end{align*}
The proof is completed.
\end{proof}

Besides, the Lemma 7 of \cite{Y1} is also useful. For the better understanding
of the lemma, the proof is given here.

\begin{lemma}
[lemma 7 of \cite{Y1}]$\label{lemma123}$For the Emden equation,%
\begin{equation}
\left\{
\begin{array}
[c]{c}%
\ddot{a}(t)=-\frac{\lambda}{a(t)},\\
a(0)=a_{0}>0,\ \dot{a}(0)=a_{1},
\end{array}
\right.  \label{LE1}%
\end{equation}
we have, if $\lambda>0$, there exists a finite time $T_{-}<+\infty$ such that
$a(T_{-})=0$.
\end{lemma}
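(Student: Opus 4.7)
The plan is to reduce the second-order ODE to a first-order ``energy'' identity by multiplying through by $\dot a$. Since $\ddot a\, \dot a = -\lambda\, \dot a/a = -\lambda\, \frac{d}{dt}\ln a(t)$, integration from $0$ to $t$ produces
\[
\frac{\dot a(t)^2}{2} = \frac{a_1^2}{2} + \lambda \ln\frac{a_0}{a(t)},
\]
valid as long as $a(t)>0$. This is the key identity that drives the rest of the argument.

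Next I would show that $a$ must eventually enter a strictly decreasing regime. Because $\ddot a = -\lambda/a < 0$ whenever $a>0$, the derivative $\dot a$ is strictly decreasing in $t$. If $a_1 \le 0$ then $\dot a(t)<0$ for all $t>0$ immediately. If $a_1>0$, the energy identity forces $a(t) \le a_0 \exp\!\bigl(a_1^2/(2\lambda)\bigr)$, so $a$ stays bounded above; combined with $\dot a$ strictly decreasing and $\ddot a$ bounded away from $0$ on any compact interval where $a$ stays bounded away from $0$, $\dot a$ must hit zero at some finite time $t_*$, after which $\dot a<0$.

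Starting from such a time $t_*$ with $\dot a(t_*)\le 0$ and $a(t_*)>0$, I would separate variables: on the decreasing branch,
\[
\dot a(t) = -\sqrt{\,a_1^2 + 2\lambda \ln(a_0/a(t))\,},
\]
so that the elapsed time until $a$ reaches $0$ is
\[
T_- - t_* \;=\; \int_0^{a(t_*)} \frac{da}{\sqrt{a_1^2 + 2\lambda \ln(a_0/a)}}.
\]
The integrand is continuous on the open interval, behaves like $1/\sqrt{-\ln a}$ near $a=0$ (integrable), and is either bounded or has at worst a $1/\sqrt{a(t_*)-a}$-type singularity at the upper endpoint (also integrable). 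Hence $T_-<\infty$ and $a(T_-)=0$.

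The main obstacle is the case $a_1>0$, where $a$ first rises to a maximum before any separation-of-variables argument becomes applicable; the energy identity is what turns this potential difficulty into a straightforward a priori bound. Once that bound is in hand, the blowup conclusion follows from elementary integrability of the reciprocal-square-root singularities at both endpoints.
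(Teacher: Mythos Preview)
Your argument is correct. Both you and the paper begin with the same energy identity
\[
\tfrac{1}{2}\dot a(t)^2 = -\lambda\ln a(t) + \theta,\qquad \theta=\lambda\ln a_0+\tfrac{1}{2}a_1^2,
\]
which immediately yields the a priori bound $a(t)\le e^{\theta/\lambda}$. From there, however, the paper takes a shorter and more elementary path: it argues by contradiction, assuming $0<a(t)\le e^{\theta/\lambda}$ for all $t\ge0$, observes that then $\ddot a(t)\le -\lambda e^{-\theta/\lambda}$, and integrates twice to obtain a downward-opening quadratic upper bound on $a(t)$, which eventually becomes negative. No case split on the sign of $a_1$, no separation of variables, and no analysis of improper integrals is needed.

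Your route---reducing to the decreasing branch and then proving the convergent time-to-collapse integral
\[
T_{-}-t_\ast=\int_0^{a(t_\ast)}\frac{da}{\sqrt{a_1^2+2\lambda\ln(a_0/a)}}
\]
is finite---is more work but also more informative: it yields an explicit formula for the blowup time and shows directly that the singularity at the turning point $a(t_\ast)$ is of integrable square-root type. The paper's argument, by contrast, only shows existence of $T_-$ without any quantitative handle on it. Either approach is perfectly adequate for the lemma as stated.
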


\begin{proof}
By integrating (\ref{LE1}), we have%
\begin{equation}
0\leq\frac{1}{2}\overset{\cdot}{a}(t)^{2}=-\lambda\ln a(t)+\theta
\label{eq1114}%
\end{equation}
where $\theta=\lambda\ln a_{0}+\frac{1}{2}a_{1}^{2}.$\newline From
(\ref{eq1114}), we get,%
\[
a(t)\leq e^{\theta/\lambda}.
\]
If the statement is not true, we have%
\[
0<a(t)\leq e^{\theta/\lambda},\text{ for all }t\geq0.
\]
But since
\[
\ddot{a}(t)=-\frac{\lambda}{a(t)}\leq\frac{-\lambda}{e^{\theta/\lambda}},
\]
we integrate this twice to deduce%
\[
a(t)\leq\int_{0}^{t}\int_{0}^{\tau}\frac{-\lambda}{e^{\theta/\lambda}}%
dsd\tau+C_{1}t+C_{0}=\frac{-\lambda t^{2}}{2e^{\theta/\lambda}}+C_{1}t+C_{0}.
\]
By taking $t$ large enough, we get%
\[
a(t)<0.
\]
As a contradiction is met, the statement of the Lemma is true.
\end{proof}

By extending the structure of the solutions (\ref{solution1}) to the
2-dimensional isothermal Euler-Poisson equations (\ref{gamma=1}) in \cite{Y1},
it is a natural result to get the proof of the Theorem \ref{thm:1}.

\begin{proof}
[Proof of Theorem \ref{thm:1}]By using the Lemma
\ref{lem:generalsolutionformasseq}, we can get that (\ref{*4}) satisfy
(\ref{usualns1})$_{1}$. For the momentum equation, we have,%
\begin{align*}
&  \rho(u_{t}+u\cdot u_{r})+K\rho_{r}-v({\normalsize u}_{rr}+\frac{N-1}%
{r}u_{r}-\frac{N-1}{r^{2}}u)\\
&  =\rho\frac{\overset{\cdot\cdot}{a}(t)}{a(t)}r+\frac{K}{a(t)}\rho
\overset{\cdot}{y}(\frac{r}{a(t)})\\
&  =\frac{\rho}{a(t)}[-\frac{\lambda r}{a(t)}+K\overset{\cdot}{y}(\frac
{r}{a(t)})].
\end{align*}
By choosing
\[
y(x)=\frac{\lambda}{2K}x^{2}+\alpha,
\]
we have verified that (\ref{*4}) satisfies the above (\ref{usualns1})$_{2}$ .
If $\lambda>0$, by the Lemma \ref{lemma123}, there exists a finite time $T$
for such that $a(T_{-})=0$. Thus, there exist blowup solutions in finite time
$T$. The proof is completed.
\end{proof}

With the assistance of the blowup rate results of the Euler-Poisson equations
i.e. Theorem 3 in \cite{Y1}, it is trivial to have the following theorem:

\begin{theorem}
\label{thm:2}With $\lambda>0$, the blowup rate of the solutions (\ref{*4}) is,%
\[
\underset{t\rightarrow T_{\ast}}{\lim}\rho(t,0)(T_{\ast}-t)^{\alpha}\geq
O(1),
\]
where the blowup time $T_{\ast}$ and $\alpha<N$ are constants.
\end{theorem}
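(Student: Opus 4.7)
The plan is to reduce the blowup-rate statement for $\rho(t,0)$ to a decay-rate statement for $a(t)$, then exploit the first integral of the Emden equation. Evaluating (\ref{*4}) at $r=0$ gives $\rho(t,0)=e^{c_{0}}/a(t)^{N}$, where I write $c_{0}$ for the constant called $\alpha$ inside $y$, to avoid clashing with the exponent $\alpha$ in the theorem statement. Hence the claim $\rho(t,0)(T_{\ast}-t)^{\alpha}\geq O(1)$ is equivalent to the upper bound $a(t)\leq C(T_{\ast}-t)^{\alpha/N}$ for $t$ close enough to $T_{\ast}$, and the task becomes proving such a bound whenever $\alpha/N<1$.

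I would first revisit the energy identity that appeared in the proof of Lemma \ref{lemma123}: integrating $\ddot a=-\lambda/a$ once yields
\[
\tfrac{1}{2}\dot{a}(t)^{2}=-\lambda\ln a(t)+\theta,\qquad \theta=\lambda\ln a_{0}+\tfrac{1}{2}a_{1}^{2}.
\]
Since $a(t)\searrow 0$ as $t\to T_{\ast}$, eventually $\dot a<0$ and so $\dot a(t)=-\sqrt{2\theta-2\lambda\ln a(t)}$. Separating variables and integrating from $t$ up to $T_{\ast}$ produces the exact relation
\[
T_{\ast}-t=\int_{0}^{a(t)}\frac{d\sigma}{\sqrt{2\theta-2\lambda\ln\sigma}}.
\]

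The main step is then the asymptotic analysis of this integral as $a(t)\to 0^{+}$. Substituting $\sigma=e^{-u}$ turns it into an upper-incomplete Gaussian-type integral, and Laplace's method (or a direct comparison with $\sigma/\sqrt{-\ln\sigma}$) gives
\[
\int_{0}^{a}\frac{d\sigma}{\sqrt{2\theta-2\lambda\ln\sigma}}\sim\frac{a}{\sqrt{-2\lambda\ln a}}\qquad\text{as }a\to 0^{+}.
\]
Consequently $T_{\ast}-t\sim c\, a(t)/\sqrt{|\ln a(t)|}$, and a further iteration yields $a(t)\sim c'(T_{\ast}-t)\sqrt{|\ln(T_{\ast}-t)|}$. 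This decay is faster than any power $(T_{\ast}-t)^{\beta}$ with $\beta<1$, so for every $\alpha<N$ one has $a(t)^{N}=o((T_{\ast}-t)^{\alpha})$ and therefore
\[
\rho(t,0)(T_{\ast}-t)^{\alpha}=e^{c_{0}}\,\frac{(T_{\ast}-t)^{\alpha}}{a(t)^{N}}\longrightarrow+\infty,
\]
which is strictly stronger than the $\geq O(1)$ claim.

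The main obstacle is the asymptotic estimate of the integral: it is not deep but the logarithm inside the square root prevents one from quoting a plain power-law separation of variables. If one wants to bypass Laplace's method altogether, the bound $a(t)=o((T_{\ast}-t)^{\beta})$ for every $\beta<1$ also follows from a one-sided comparison: since $\dot a\leq -\sqrt{-2\lambda\ln a-2|\theta|}$ eventually, $a(t)$ is dominated above by an explicit super-solution of the form $c_{1}(T_{\ast}-t)|\ln(T_{\ast}-t)|^{1/2}$, which can be checked by plugging it into the differential inequality and which delivers the same conclusion without any special-function asymptotics.
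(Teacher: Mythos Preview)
The paper does not actually supply a proof of Theorem~\ref{thm:2}; it simply declares the result ``trivial'' given Theorem~3 of \cite{Y1}, the point being that the scaling factor $a(t)$ in (\ref{*4}) satisfies the same Emden equation (\ref{LE1}) already analysed there, and $\rho(t,0)=e^{y(0)}/a(t)^{N}$ reduces the blowup-rate question to the decay of $a(t)$ near $T_{\ast}$. Your proposal is a correct, self-contained reconstruction of precisely that argument: the reduction to $a(t)^{N}$, the first integral $\tfrac{1}{2}\dot a^{2}=-\lambda\ln a+\theta$, the separation-of-variables formula $T_{\ast}-t=\int_{0}^{a(t)}(2\theta-2\lambda\ln\sigma)^{-1/2}\,d\sigma$, and the Laplace-type asymptotic $a(t)\sim c\,(T_{\ast}-t)\sqrt{|\ln(T_{\ast}-t)|}$ all go through as you describe and yield the stronger conclusion $\rho(t,0)(T_{\ast}-t)^{\alpha}\to+\infty$ for every $\alpha<N$. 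In effect you have unpacked the cited black box rather than taken a genuinely different route; the benefit of your write-up is that it makes the paper self-contained on this point and exposes the sharp asymptotic of $a(t)$, which the paper's one-line citation leaves implicit.
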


\begin{remark}
If we are interested in the mass of the solutions, the mass of the solutions
can be calculated by:.%
\[
M(t)=\int_{R^{N}}^{{}}\rho(t,s)ds=\alpha(N)\int_{0}^{+\infty}\rho
(t,s)s^{N-1}ds,
\]
where $\alpha(N)$ denotes some constant related to the unit ball in $R^{N}$:
$\alpha(1)=1$; $\alpha(2)=2\pi$; for $N\geq3,$%
\[
\alpha(N)=N(N-2)V(N)=N(N-2)\frac{\pi^{N/2}}{\Gamma(N/2+1)},
\]
where $V(N)$ is the volume of the unit ball in $R^{N}$ and $\Gamma$ is the
Gamma function. We observe that the mass of the initial time $0$:\newline(1)
for $\lambda\geq0$%
\[
M(0)=\frac{\alpha(N)}{a_{0}^{N}}\int_{0}^{+\infty}e^{\frac{\lambda}{2K}%
s^{2}+\alpha}s^{N-1}ds.
\]
The mass is infinitive. The very large density comes from the ends of outside
of the origin $O$.\newline(2) for $\lambda<0,$%
\[
M(0)=\frac{\alpha(N)}{a_{0}^{N}}\int_{0}^{+\infty}e^{\frac{\lambda}{2K}%
s^{2}+\alpha}s^{N-1}ds=\frac{\alpha(N)e^{\alpha}}{a_{0}^{N}}\int_{0}^{+\infty
}e^{\frac{\lambda}{2K}s^{2}}s^{N-1}ds.
\]
The mass of the solution can be arbitrarily small but without compact support
if $\alpha$ is taken to be a very small negative number.
\end{remark}

\begin{remark}
Our results can be easily extended to the isothermal Euler/ Navier-Stokes
equations with frictional damping term with the assistance of Lemma 7 in
\cite{Y}:%
\[
\left\{
\begin{array}
[c]{c}%
\rho_{t}+u\rho_{r}+\rho u_{r}+\frac{N-1}{r}\rho u=0,\\
\rho(u_{t}+u\cdot u_{r})+K\rho_{r}+\beta\rho u=v{\normalsize (u}_{rr}%
+\frac{N-1}{r}u_{r}-\frac{N-1}{r^{2}}u),
\end{array}
\right.
\]
where $\beta\geq0$ and $v\geq0$.\newline The solutions are:%
\[
\left\{
\begin{array}
[c]{c}%
\rho(t,r)=\frac{e^{y(r/a(t))}}{a(t)^{N}},u(t,r)=\frac{\overset{\cdot}{a}%
(t)}{a(t)}r,\\
\overset{\cdot\cdot}{a}(t)+\beta\dot{a}(t)=\frac{-\lambda}{a(t)}%
,a(0)=a_{0}>0,\overset{\cdot}{a}(0)=a_{1},\\
y(x)=\frac{\lambda}{2K}x^{2}+\alpha.
\end{array}
\right.
\]

\end{remark}

\begin{remark}
Our results can be easily extended to the isothermal Euler/ Navier-Stokes
equations with frictional damping term with the assistance of Lemma 7 in
\cite{Y}:%
\[
\left\{
\begin{array}
[c]{c}%
\rho_{t}+u\rho_{r}+\rho u_{r}+\frac{N-1}{r}\rho u=0,\\
\rho(u_{t}+u\cdot u_{r})+K\rho_{r}+\beta\rho u=v{\normalsize (u}_{rr}%
+\frac{N-1}{r}u_{r}-\frac{N-1}{r^{2}}u)
\end{array}
\right.
\]
where $\beta\geq0$ and $v\geq0$.\newline The solutions are:%
\[
\left\{
\begin{array}
[c]{c}%
\rho(t,r)=\frac{e^{y(r/a(t))}}{a(t)^{N}},u(t,r)=\frac{\overset{\cdot}{a}%
(t)}{a(t)}r,\\
\overset{\cdot\cdot}{a}(t)+\beta\dot{a}(t)=\frac{-\lambda}{a(t)}%
,a(0)=a_{0}>0,\overset{\cdot}{a}(0)=a_{1},\\
y(x)=\frac{\lambda}{2K}x^{2}+\alpha.
\end{array}
\right.
\]

\end{remark}

\begin{remark}
The solutions (\ref{solution1}) to the Euler-Poisson equations only work for
the $2$-dimensional case. But the solutions (\ref{*4}) to the Navier-Stokes
equations work for the $N$-dimensional $(N\geq1)$ case.
\end{remark}

\begin{remark}
We may extend the solutions to the $2$-dimensional Euler/Navier-Stokes
equations with a solid core \cite{Lin}:%
\[
\left\{
\begin{array}
[c]{rl}%
\rho_{t}+u\rho_{r}+\rho u_{r}+{\normalsize \frac{1}{r}\rho u} &
{\normalsize =0,}\\
\rho\left(  u_{t}+uu_{r}\right)  +K\rho_{r}+\beta\rho u & {\normalsize =}%
\frac{M_{0}}{r}+{\normalsize v(u}_{rr}+\frac{1}{r}u_{r}-\frac{1}{r^{2}}u),
\end{array}
\right.
\]
where $M_{0}>0$, there is a unit stationary solid core locating $[0,r_{0}]$,
where $r_{0}$ is a positive constant, surrounded by the distribution
density.\newline The corresponding solutions are:%
\[
\left\{
\begin{array}
[c]{c}%
\rho(t,r)=\frac{e^{y(r/a(t))}}{a(t)^{2}},\text{ }u(t,r)=\frac{\overset{\cdot
}{a}(t)}{a(t)}r,\text{ for }r>r_{0},\\
\overset{\cdot\cdot}{a}(t)+\beta\dot{a}(t)=\frac{-\lambda}{a(t)},\text{
}a(0)=a_{0}>0,\overset{\cdot}{a}(0)=a_{1},\\
y(x)=\frac{\lambda}{2K}x^{2}+M_{0}\ln x+\alpha,
\end{array}
\right.
\]
where $\alpha>\frac{-\lambda}{2K}$ is a constant.
\end{remark}

\section{Pressureless Navier-Stokes Equations with Density-dependent
Viscosity}

Now we consider the pressureless Navier-Stokes equations with
density-dependent viscosity:%
\[
vis(\rho,u)\doteq\bigtriangledown(\mu(\rho)\bigtriangledown\cdot u),
\]
in radial symmetry:%
\begin{equation}
\left(
\begin{array}
[c]{rl}%
\rho_{t}+u\rho_{r}+\rho u_{r}+{\normalsize \frac{N-1}{r}\rho u} &
{\normalsize =0,}\\
\rho\left(  u_{t}+uu_{r}\right)   & {\normalsize =(\mu(}\rho))_{r}(\frac
{N-1}{r}u+u_{r})+\mu(\rho)(u_{rr}+\frac{N-1}{r}u_{r}-\frac{N-1}{r^{2}}u),
\end{array}
\right.  \label{NSDens}%
\end{equation}
where $\mu(\rho)$ is a density-dependent viscosity function, which is usually
written as $\mu(\rho)\doteq\kappa\rho^{\theta}$ with the constants $\kappa,$
$\theta>0$. For the study of this kind of the above system, the readers may
refer \cite{M2}\cite{Ni}\cite{YZ}.

We can obtain the similar estimate about Lemma \ref{lemma123} to the following
ODE,%
\begin{equation}
\left\{
\begin{array}
[c]{c}%
\ddot{a}(t)=\frac{\lambda\overset{\cdot}{a}(t)}{a(t)^{2}},\\
a(0)=a_{0}>0,\ \dot{a}(0)=a_{1}\leq\frac{\lambda}{a_{0}}.
\end{array}
\right.  \label{Lane-Emden}%
\end{equation}

\begin{lemma}
$\label{lemma1}$For the ODE (\ref{Lane-Emden}), with $\lambda>0$, there exists
a finite time $T_{-}<+\infty$ such that $a(T_{-})=0$.
\end{lemma}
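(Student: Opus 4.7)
The plan is to mirror the strategy used for Lemma \ref{lemma123}: exhibit a first integral of (\ref{Lane-Emden}) that reduces the second-order equation to an autonomous first-order equation in $a$, and then extract a finite-time annihilation estimate by comparison. The key observation is that the ODE has the conservative form
\[
\frac{d}{dt}\!\left[\dot{a}(t)+\frac{\lambda}{a(t)}\right] \;=\; \ddot{a}(t) - \frac{\lambda\,\dot{a}(t)}{a(t)^{2}} \;=\; 0
\]
along any solution of (\ref{Lane-Emden}). Hence $\dot{a}(t)+\lambda/a(t) \equiv C := a_{1}+\lambda/a_{0}$ for all $t$ in the interval of existence.

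Next I would use this conservation law to pin down the sign of $\dot{a}$. Rewriting gives the scalar autonomous equation $\dot{a}(t) = C - \lambda/a(t)$. The hypothesis $a_{1}\leq \lambda/a_{0}$ will be combined with $a(0)=a_{0}$ to show that the trajectory cannot hit a fixed point $a_{\ast}=\lambda/C$ of this reduced flow inside $(0,a_{0}]$; consequently $\dot{a}(t)<0$ throughout, so $a$ is strictly monotone decreasing, and in particular $a(t)\leq a_{0}$ and $\lambda/a(t)\geq \lambda/a_{0}$ for all $t\geq 0$ in the interval of existence.

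From here, finite-time blow-down follows by the same contradiction argument as in Lemma \ref{lemma123}. Assume for contradiction that $a(t)>0$ for all $t\geq 0$. Once $a(t)$ has dropped below a fixed threshold (e.g.\ $a_{0}/2$), the singular term $-\lambda/a(t)$ dominates $C$, so $\dot{a}(t)\leq -\lambda/(2a(t))$, equivalently
\[
\frac{d}{dt}\bigl(a(t)^{2}\bigr) \;=\; 2a(t)\dot{a}(t) \;\leq\; -\lambda.
\]
Integration from this threshold time onward yields $a(t)^{2}\leq a_{0}^{2}-\lambda(t-t_{0})$, and the right-hand side becomes negative at a finite $t$, contradicting $a(t)>0$. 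Hence there is a finite $T_{-}$ with $a(T_{-})=0$, as claimed.

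I expect the main technical nuisance to be Step 2: verifying, under the stated inequality $a_{1}\leq\lambda/a_{0}$, that the reduced ODE $\dot{a} = C - \lambda/a$ really does drive $a$ strictly downward (i.e.\ that no turning point blocks the descent). Once this monotonicity is secured, the finite-time estimate is a routine quadratic comparison almost identical to the one in the proof of Lemma \ref{lemma123}; the conservation law does all the work, replacing the logarithmic integration used there.
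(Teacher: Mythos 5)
Your first integral is the right one and is computed correctly: integrating (\ref{Lane-Emden}) once gives $\dot{a}(t)+\lambda/a(t)=a_{1}+\lambda/a_{0}$, so with $C:=a_{1}+\lambda/a_{0}$ the equation reduces to $\dot{a}=C-\lambda/a$. The paper performs the same integration but with a sign error, obtaining $\dot{a}(t)=-\lambda/a(t)-\lambda/a_{0}+a_{1}$ where the correct identity is $\dot{a}(t)=-\lambda/a(t)+\lambda/a_{0}+a_{1}$; with the erroneous sign the hypothesis $a_{1}\leq\lambda/a_{0}$ instantly yields $\dot{a}\leq-\lambda/a$, and the quadratic comparison $\frac{d}{dt}(a^{2})\leq-2\lambda$ finishes the argument. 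Because your algebra is right, the step you flag as a ``technical nuisance'' is not a detail to be checked --- it is the whole problem, and under the stated hypothesis it cannot be carried out. Take $\lambda=1$, $a_{0}=1$, $a_{1}=1/2$, which satisfies $a_{1}\leq\lambda/a_{0}$. Then $C=3/2$, the fixed point $a_{\ast}=\lambda/C=2/3$ lies inside $(0,a_{0}]$, and $\dot{a}(0)=a_{1}>0$ with $\ddot{a}(0)=\lambda a_{1}/a_{0}^{2}>0$: the trajectory moves monotonically upward away from $a_{\ast}$, exists for all time, and never vanishes. So the monotone-decrease claim in your Step 2 is false, and with it the lemma as stated.

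What survives: if the hypothesis is strengthened to $a_{1}<0$, your argument closes easily. Then $C<\lambda/a_{0}\leq\lambda/a(t)$ as long as $a(t)\leq a_{0}$, so $\dot{a}(t)=C-\lambda/a(t)\leq C-\lambda/a_{0}=a_{1}<0$; hence $a$ is strictly decreasing, remains below $a_{0}$, and satisfies $a(t)\leq a_{0}+a_{1}t$, forcing $a(T_{-})=0$ for some $T_{-}\leq-a_{0}/a_{1}$ (no quadratic comparison is needed). Conversely, for $a_{1}=0$ the solution is the constant $a\equiv a_{0}$, and for $0<a_{1}\leq\lambda/a_{0}$ it grows without bound, so $a_{1}<0$ is also necessary. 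In short, your strategy mirrors the paper's, your computation is the correct one, and precisely for that reason the proof cannot be completed: the gap in Step 2 is a genuine obstruction reflecting a defect in the statement, which the paper's own proof conceals behind a compensating sign error.
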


\begin{proof}
(1) If $a(t)>0$ and $\dot{a}(0)=a_{1}\leq\frac{\lambda}{a_{0}}$ for all time
$t$, by integrating (\ref{Lane-Emden}), we have%
\begin{equation}
\overset{\cdot}{a}(t)=-\frac{\lambda}{a(t)}-\frac{\lambda}{a_{0}}+a_{1}%
\leq-\frac{\lambda}{a(t)}. \label{123}%
\end{equation}
Take the integration for (\ref{123}):
\begin{align*}
\int_{0}^{t}a(s)\overset{\cdot}{a}(s)ds  &  \leq-\int_{0}^{t}\lambda ds,\\
\frac{1}{2}[a(t)]^{2}  &  \leq-\lambda t+\frac{1}{2}a_{0}^{2}.
\end{align*}
When $t$ is very large, we have%
\[
\frac{1}{2}[a(t)]^{2}\leq-1.
\]
A contradiction is met. The proof is completed.
\end{proof}

Here we present another lemma before proceeding to the next theorem.

\begin{lemma}
$\label{lemma1 copy(1)}$For the ODE%
\begin{equation}
\left\{
\begin{array}
[c]{c}%
\overset{\cdot}{y}(x)y(x)^{n}-\xi x=0,\\
y(0)=\alpha>0,n\neq-1,
\end{array}
\right.  \label{seperateODE}%
\end{equation}
where $\xi$ and $n$ are constants,\newline we have the solution
\[
y(x)=\sqrt[n+1]{\frac{1}{2}(n+1)\xi x^{2}+\alpha^{n+1}}.
\]

\end{lemma}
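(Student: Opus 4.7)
The plan is to recognize the ODE (\ref{seperateODE}) as a separable first-order equation and integrate it directly. Rewriting the equation in differential form gives $y^{n}\,dy = \xi x\,dx$, which is valid wherever $y(x) \neq 0$; near $x=0$ this is ensured by the initial condition $y(0) = \alpha > 0$ together with continuity of the solution.

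Next I would integrate both sides from $0$ to $x$. The left side, using the substitution $u = y(s)$ and the hypothesis $n \neq -1$, produces $\frac{y(x)^{n+1} - \alpha^{n+1}}{n+1}$, while the right side is simply $\frac{\xi x^{2}}{2}$. Multiplying through by $n+1$ yields
\[
y(x)^{n+1} = \tfrac{1}{2}(n+1)\xi x^{2} + \alpha^{n+1},
\]
and taking the $(n+1)$-th root recovers the stated formula. A direct verification by differentiating $y(x) = \bigl(\tfrac{1}{2}(n+1)\xi x^{2} + \alpha^{n+1}\bigr)^{1/(n+1)}$ also confirms that $\dot{y}(x) y(x)^{n} = \xi x$ and $y(0) = \alpha$, closing the argument.

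There is no substantial obstacle here; the hypothesis $n \neq -1$ is exactly what is needed to divide by $n+1$ when antidifferentiating $y^{n}$, and the positivity $\alpha > 0$ guarantees the $(n+1)$-th root is well defined on a neighborhood of $x = 0$. The only care required is to note that the formula defines $y(x)$ only so long as the quantity $\tfrac{1}{2}(n+1)\xi x^{2} + \alpha^{n+1}$ remains positive (or at least in the domain of the chosen root), which is automatic for small $x$ and for all $x$ when $(n+1)\xi \geq 0$.
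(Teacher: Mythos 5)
Your proposal is correct and follows essentially the same route as the paper: both treat (\ref{seperateODE}) as a separable equation and integrate $\overset{\cdot}{y}(x)y(x)^{n}=\xi x$ from $0$ to $x$. The only difference is mechanical --- you antidifferentiate $y^{n}$ directly via the power rule (using $n\neq-1$), whereas the paper evaluates $\int_{0}^{x}y^{n}\,d[y]$ by an integration-by-parts self-substitution trick; your version is the more direct one, and your closing remarks about well-definedness of the $(n+1)$-th root are a harmless addition not present in the paper.
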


\begin{proof}
The above ODE (\ref{seperateODE}) may be solved by the separation method:%
\[
\overset{\cdot}{y}(x)y(x)^{n}-\xi x=0,
\]%
\[
\overset{\cdot}{y}(x)y(x)^{n}=\xi x.
\]
By taking the integration with respect to $x:$%
\[
\int_{0}^{x}\overset{\cdot}{y}(x)y(x)^{n}dx=\int_{0}^{x}\xi xdx,
\]
we have,%
\begin{equation}
\int_{0}^{x}y(x)^{n}d[y(x)]=\frac{1}{2}\xi x^{2}+C_{1}, \label{eq111}%
\end{equation}
where $C_{1}$ is a constant. \newline By integration by part, then the
identity becomes%
\[
y(x)^{n+1}-n\int_{0}^{x}y(x)^{n-1}\overset{\cdot}{y}(x)y(x)dx=\frac{1}{2}\xi
x^{2}+C_{1},
\]%
\[
y(x)^{n+1}-n\int_{0}^{x}\overset{\cdot}{y}(x)y(x)^{n}dx=\frac{1}{2}\xi
x^{2}+C_{1}.
\]
From the equation (\ref{eq111}), we can have the simple expression for $y(x)$:%
\[
y(x)^{n+1}-n(\frac{1}{2}\xi x^{2}+C_{1})=\frac{1}{2}\xi x^{2}+C_{1},
\]%
\[
y(x)^{n+1}=\frac{1}{2}(n+1)\xi x^{2}+C_{2},
\]
where $C_{2}=(n+1)C_{1}.$

By plugging into the initial condition for $y(0)$, we have%
\[
y(0)^{n+1}=\alpha^{n+1}=C_{2}.
\]
Thus, the solution is:%
\[
y(x)=\sqrt[n+1]{\frac{1}{2}(n+1)\xi x^{2}+\alpha^{n+1}}.
\]
The proof is completed.
\end{proof}

The family of the solution to the pressureless Navier-Stokes equations with
density-dependent viscosity:%
\begin{equation}
\left(
\begin{array}
[c]{rl}%
\rho_{t}+u\rho_{r}+\rho u_{r}+{\normalsize \frac{N-1}{r}\rho u} &
{\normalsize =0,}\\
\rho\left(  u_{t}+uu_{r}\right)   & {\normalsize =(\kappa\rho}^{\theta}%
)_{r}(\frac{N-1}{r}u+u_{r})+{\normalsize \kappa\rho}^{\theta}(u_{rr}%
+\frac{N-1}{r}u_{r}-\frac{N-1}{r^{2}}u),
\end{array}
\right.  \label{NNS}%
\end{equation}
is presented as the followings:

\begin{theorem}
\label{thm:3}For the pressureless Navier-Stokes equations with
density-dependent viscosity (\ref{NNS}) in radial symmetry, there exists a
family of solutions,\newline for $\theta=1$:%
\[
\left\{
\begin{array}
[c]{c}%
\rho(t,r)=\frac{e^{y(r/a(t))}}{a(t)^{N}},u(t,r)=\frac{\overset{\cdot}{a}%
(t)}{a(t)}r,\\
\overset{\cdot\cdot}{a}(t)=\frac{\lambda\overset{\cdot}{a}(t)}{a(t)^{2}%
},a(0)=a_{0}>0,\overset{\cdot}{a}(0)=a_{1},\\
y(x)=\frac{\lambda}{2N\kappa}x^{2}+\alpha,
\end{array}
\right.
\]
where $\alpha$ and $\lambda$ are arbitrary constants. \newline In particular,
for $\lambda>0$ and $a_{1}$ $\leq\frac{\lambda}{a_{0}}$, the solutions blow up
in finite time;\newline for $\theta\neq1$:%
\begin{equation}
\left\{
\begin{array}
[c]{c}%
\rho(t,r)=\left\{
\begin{array}
[c]{cc}%
\frac{^{y(r/a(t))}}{a(t)^{N}}, & \text{ for }y(\frac{r}{a(t)})\geq0;\\
0, & \text{for }y(\frac{r}{a(t)})<0
\end{array}
\right.  ,\text{ }u(t,r)=\frac{\overset{\cdot}{a}(t)}{a(t)}r,\\
\overset{\cdot\cdot}{a}(t)=\frac{-\lambda\overset{\cdot}{a}(t)}%
{a(t)^{^{N\theta-N+2}}},\text{ }a(0)=a_{0}>0,\text{ }\overset{\cdot}%
{a}(0)=a_{1},\\
y(x)=\sqrt[\theta-1]{\frac{1}{2}(\theta-1)\frac{-\lambda}{N\kappa\theta}%
x^{2}+\alpha^{\theta-1}},
\end{array}
\right.  \label{theather1}%
\end{equation}
where $\alpha>0$.
\end{theorem}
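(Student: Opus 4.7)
\textbf{Proof plan for Theorem \ref{thm:3}.} The plan is to imitate the proof of Theorem \ref{thm:1}: first dispatch the continuity equation by invoking Lemma \ref{lem:generalsolutionformasseq}, then plug the ansatz $u(t,r)=\dot{a}(t)r/a(t)$ into the momentum equation and arrange the resulting identity into one factor that depends only on $t$ (through $a,\dot a,\ddot a$) and one factor that depends only on the similarity variable $x=r/a(t)$ (through $y$ and $\dot y$). Setting each factor equal to a common separation constant will produce, simultaneously, the ODE for $a$ and the ODE for $y$ that are listed in (\ref{theather1}).

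First I would carry out the kinematic reductions that make the viscosity terms collapse. For $u=\dot a r/a$ one has $u_r=\dot a/a$, $u_{rr}=0$, and $u/r=\dot a/a$, so the bracket $u_{rr}+\frac{N-1}{r}u_r-\frac{N-1}{r^2}u$ vanishes identically. Also $\frac{N-1}{r}u+u_r = N\dot a/a$, so the momentum equation (\ref{NSDens})$_2$ simplifies to
\begin{equation*}
\rho\,\frac{\ddot a(t)}{a(t)}\,r \;=\; N\,\frac{\dot a(t)}{a(t)}\,\bigl(\kappa\rho^{\theta}\bigr)_r.
\end{equation*}
Substituting $\rho(t,r)=f(r/a(t))/a(t)^N$ (so that by Lemma \ref{lem:generalsolutionformasseq} the continuity equation is automatic) and using $(\rho^{\theta})_r = \theta f(x)^{\theta-1}\dot f(x)/a(t)^{N\theta+1}$ with $x=r/a(t)$, the identity rearranges to
\begin{equation*}
\frac{\ddot a(t)\, a(t)^{N\theta-N+2}}{\dot a(t)}\cdot x\, f(x) \;=\; N\kappa\theta\, f(x)^{\theta-1}\dot f(x).
\end{equation*}
The left factor depends only on $t$, the right factor only on $x$, so both must equal a common constant, which I name $\lambda$ (with the sign conventions chosen in the theorem). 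This yields the ODE for $a(t)$ stated in (\ref{theather1}).

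Next I would integrate the profile ODE separately in the two cases. For $\theta=1$ the equation reduces to $\dot y(x)=\lambda x/(N\kappa)$ after writing $f=e^{y}$, whose solution is $y(x)=\lambda x^{2}/(2N\kappa)+\alpha$. For $\theta\neq 1$ the profile equation becomes $\dot y(x)\,y(x)^{\theta-2} = -\lambda x/(N\kappa\theta)$, which is precisely the separable ODE solved in Lemma \ref{lemma1 copy(1)} with $n=\theta-2$ and $\xi=-\lambda/(N\kappa\theta)$; that lemma delivers the stated explicit formula. The truncation $\rho=0$ on $\{y<0\}$ is automatic because both sides of the momentum equation vanish there (the viscous term contains the factor $\rho^{\theta}$), so the piecewise definition constitutes a weak solution.

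Finally, for the blow-up claim in the case $\theta=1$, $\lambda>0$, $a_1\leq\lambda/a_0$, I would simply quote Lemma \ref{lemma1}, which is tailored to the ODE $\ddot a=\lambda\dot a/a^{2}$ and guarantees that $a(T_-)=0$ at some finite $T_-<+\infty$. I expect the main obstacle to be bookkeeping rather than conceptual: keeping the exponents of $a(t)$, the factor of $\theta$, and the sign of $\lambda$ consistent between the two cases $\theta=1$ and $\theta\neq 1$, and verifying that the formula from Lemma \ref{lemma1 copy(1)} matches (\ref{theather1}) after substituting $n=\theta-2$.
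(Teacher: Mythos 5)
Your proposal is correct and follows essentially the same route as the paper's proof: the continuity equation is handled by Lemma \ref{lem:generalsolutionformasseq}, the momentum equation collapses because the bracket $u_{rr}+\frac{N-1}{r}u_{r}-\frac{N-1}{r^{2}}u$ vanishes for $u=\dot{a}r/a$, the profile ODE is solved by Lemma \ref{lemma1 copy(1)} with $n=\theta-2$ and $\xi=-\lambda/(N\kappa\theta)$, and the blow-up claim is Lemma \ref{lemma1}. The only (cosmetic) difference is that you derive the two ODEs by separating variables, whereas the paper verifies the given ansatz directly; your bookkeeping of the sign of $\lambda$ and of the profile $f=e^{y}$ versus $f=y$ in the two cases matches the theorem as stated.
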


\begin{proof}
[Proof of Theorem \ref{thm:3}]To (\ref{NNS})$_{1}$, we may use Lemma
\ref{lem:generalsolutionformasseq} to check it.\newline For $\theta=1$,
(\ref{NNS})$_{2,}$ becomes:%
\begin{align}
&  \rho(u_{t}+u\cdot u_{r})-{\normalsize (}\kappa\rho)_{r}(\frac{N-1}%
{r}u+u_{r})-\kappa\rho_{r}(u_{rr}+\frac{N-1}{r}u_{r}-\frac{N-1}{r^{2}%
}u)\label{*1}\\
&  =\rho\frac{\overset{\cdot\cdot}{a}(t)}{a(t)}r-N(\frac{\kappa e^{y(r/a(t))}%
}{a(t)^{N}})_{r}\frac{\overset{\cdot}{a}(t)}{a(t)}\nonumber\\
&  =\rho\left(  \frac{\lambda\overset{\cdot}{a}(t)r}{a(t)^{3}}\right)
-\frac{N\kappa e^{y(r/a(t))}\overset{\cdot}{y}(\frac{r}{a(t)})}{a(t)^{N+1}%
}\frac{\overset{\cdot}{a}(t)}{a(t)}\nonumber\\
&  =\frac{\rho\overset{\cdot}{a}(t)}{a(t)^{2}}\left(  \frac{\lambda r}%
{a(t)}-N\kappa\overset{\cdot}{y}(\frac{r}{a(t)})\right)  ,\nonumber
\end{align}
where we use
\[
\overset{\cdot\cdot}{a}(t)=\frac{\lambda\overset{\cdot}{a}(t)}{a(t)^{2}}.
\]
By choosing
\[
y(\frac{r}{a(t)})\doteq y(x)=\frac{\lambda}{2N\kappa}x^{2}+\alpha,
\]
(\ref{*1}) is equal to zero.\newline For the case of $\theta\neq1$,
(\ref{NNS})$_{2}$ can be calculated:
\begin{align}
&  \rho(u_{t}+u\cdot u_{r})-{\normalsize (}\kappa\rho^{\theta})_{r}\left(
\frac{N-1}{r}u+u_{r}\right)  -\kappa\rho^{\theta}(u_{rr}+\frac{N-1}{r}%
u_{r}-\frac{N-1}{r^{2}}u)\\
&  =\rho\left(  -\frac{\lambda\overset{\cdot}{a}(t)r}{a(t)^{N\theta-N+2}%
a(t)}\right)  -\frac{N\kappa\theta y(\frac{r}{a(t)})^{\theta-1}\overset{\cdot
}{y}(\frac{r}{a(t)})}{a(t)^{N(\theta-1)}a(t)^{N+1}}\frac{\overset{\cdot}%
{a}(t)}{a(t)}\nonumber\\
&  =\rho\left(  -\frac{\lambda\overset{\cdot}{a}(t)r}{a(t)^{N\theta-N+2}%
a(t)}\right)  -\frac{N\kappa\theta y(\frac{r}{a(t)})y(\frac{r}{a(t)}%
)^{\theta-2}\overset{\cdot}{y}(\frac{r}{a(t)})\overset{\cdot}{a}(t)}%
{a(t)^{N}a(t)^{N\theta-N+2}}\nonumber\\
&  =\rho\left(  -\frac{\lambda\overset{\cdot}{a}(t)r}{a(t)^{N\theta-N+2}%
a(t)}\right)  -\frac{N\kappa\theta\rho y(\frac{r}{a(t)})^{\theta-2}%
\overset{\cdot}{y}(\frac{r}{a(t)})\overset{\cdot}{a}(t)}{a(t)^{N\theta-N+2}%
}\nonumber\\
&  =\frac{-\rho\overset{\cdot}{a}(t)}{a(t)^{^{N\theta-N+2}}}\left(
-\frac{\lambda r}{a(t)}+N\kappa\theta y(\frac{r}{a(t)})^{\theta-2}%
\overset{\cdot}{y}(\frac{r}{a(t)})\right)  .
\end{align}
Define $x\doteq\frac{r}{a(t)}$, $n\doteq\theta-2,$ it follows:
\begin{align}
&  =\frac{-\rho\overset{\cdot}{a}(t)}{a(t)^{^{N\theta-N+2}}}\left(  \lambda
x+N\kappa\theta y(x)^{n}\overset{\cdot}{y}(x)\right)  \label{eq103}\\
&  =\frac{-\lambda\rho\overset{\cdot}{a}(t)}{a(t)^{^{N\theta-N+2}}}\left(
x+\frac{N\kappa\theta}{\lambda}y(x)^{n}\overset{\cdot}{y}(x)\right)  ,
\end{align}
and $\xi\doteq\frac{\lambda}{N\kappa\theta}$ in Lemma \ref{lemma1 copy(1)},
and choose%
\[
y(\frac{r}{a(t)})\doteq y(x)=\sqrt[\theta-1]{\frac{1}{2}(\theta-1)\frac
{-\lambda}{N\kappa\theta}x^{2}+\alpha^{\theta-1}}.
\]
And this is easy to check that%
\[
\overset{\cdot}{y}(0)=0.
\]
The equation (\ref{eq103}) is equal to zero. The proof is completed.
\end{proof}

\begin{remark}
By controlling the initial conditions in some solutions (\ref{theather1}), we
may get the blowup solutions. And the modified solutions can be extended to
the system in radial symmetry with frictional damping:%
\[
\left\{
\begin{array}
[c]{rl}%
\rho_{t}+u\rho_{r}+\rho u_{r}+{\normalsize \frac{N-1}{r}\rho u} &
{\normalsize =0,}\\
\rho\left(  u_{t}+uu_{r}\right)  +\beta\rho u & {\normalsize =(\mu(}\rho
))_{r}\left(  \frac{N-1}{r}u+u_{r}\right)  +\mu(\rho)(u_{rr}+\frac{N-1}%
{r}u_{r}-\frac{N-1}{r^{2}}u),
\end{array}
\right.
\]
where $\beta>0$,\newline with the assistance of the ODE:%
\[
\left\{
\begin{array}
[c]{c}%
\overset{\cdot\cdot}{a}(t)+\beta\overset{\cdot}{a}(t)=\frac{-\lambda
\overset{\cdot}{a}(t)}{a(t)^{S}},\\
a(0)=a_{0}>0,\overset{\cdot}{a}(0)=a_{1},
\end{array}
\right.
\]
where $S$ is a constant.
\end{remark}


\begin{thebibliography}{99}                                                                                               %


\bibitem {CW}G.Q. Chen and D.H. Wang, \textit{The Cauchy Problem for the Euler
Equations for Compressible Fluids}, Handbook of Mathematical Fluid Dynamics,
Vol. I, 421-543, North-Holland, Amsterdam, 2002.

\bibitem {DXY}Y.B. Deng, J.L. Xiang and T. Yang, \textit{Blowup Phenomena of
Solutions to Euler-Poisson Equations}, J. Math. Anal. Appl. \textbf{286}
(1)(2003), 295-306.

\bibitem {GW}P.Goldreich, S. Weber, \textit{Homologously Collapsing Stellar
Cores}, Astrophys, J. \textbf{238}, 991-997 (1980).

\bibitem {L}P.L. Lions, \textit{Mathematical Topics in Fluid Mechanics}, Vols.
1, 2, 1998, Oxford:Clarendon Press, 1998.

\bibitem {Li}T.H. Li, \textit{Some Special Solutions of the Multidimensional
Euler Equations in }$R^{N}$, Comm. Pure Appl. Anal\textit{.} \textbf{4}
(4)(2005), 757-762.

\bibitem {Lin}S. S. Lin, \textit{Stability of Gaseous Stars in Radially
Symmetric Motions}, SIAM J. Math. Anal. \textbf{28} (1997), No. 3, 539--569.

\bibitem {M1}T. Makino, \textit{Blowing up Solutions of the Euler-Poission
Equation for the Evolution of the Gaseous Stars}, Transport Theory and
Statistical Physics \textbf{21} (1992), 615-624.

\bibitem {M2}T. Makino, \textit{On a local existence theorem for the evolution
equations of gaseous stars}, \textquotedblleft Patterns and wave-qualitative
analysis of nonlinear differential equations\textquotedblright, Ed. T.
Nishida, M. Mimura and H. Fujii, North-Holland, 1986, 459--479.

\bibitem {Ni}T. Nishida, \textit{Equations of fluid dynamics-free surface
problems}, Comm. Pure Appl. Math., XXXIX (1986), 221--238.

\bibitem {SI}T.C. Sideris, \textit{Formation of Singularities in
Three-dimensional Compressible Fluids}, Comm. Math. Phys. \textbf{101} (1985),
No. 4, 475-485.

\bibitem {YZ}T. Yang, Z.A. Yao, C. J. Zhu, \textit{Compressible Navier-Stokes
equations with density-dependent viscosity and vacuum}, Comm. Partial
Differential Equations 26 (2001), no. 5-6, 965--981.

\bibitem {Y}M.W. Yuen, \textit{Blowup Solutions for a Class of Fluid Dynamical
Equations in }$R^{N}$, J. Math. Anal. Appl. \textbf{329} (2)(2007), 1064-1079.

\bibitem {Y1}M.W. Yuen, \textit{Analytical Blowup Solutions to the
2-dimensional Isothermal Euler-Poisson Equations of Gaseous Stars}, J. Math.
Anal. Appl. \textbf{341 (}1\textbf{)(}2008\textbf{), }445-456.
\end{thebibliography}
\end{document}